\title{Greedy Matchings in Bipartite Graphs \\ with Ordered Vertex Sets}
\author{Hans Ulrich Simon}
\newtheorem{theorem}{Theorem}[section]
\newtheorem{lemma}[theorem]{Lemma}
\newtheorem{corollary}[theorem]{Corollary}
\newtheorem{remark}[theorem]{Remark}
\newtheorem{example}[theorem]{Example}
\newcommand{\eset}{\emptyset}
\newcommand{\cS}{{\mathcal S}}
\newcommand{\seq}{\subseteq}
\newcommand{\ra}{\rightarrow}
\newcommand{\la}{\leftarrow}
\newcommand{\path}{\stackrel{*}{\rightarrow}}
\newcommand{\htap}{\stackrel{*}{\leftarrow}}
\begin{document}

\maketitle

\begin{abstract}
We define and study greedy matchings in finite bipartite graphs. 
The latter are of the form $G = (L,R,E)$ where the vertex 
classes $L$ and $R$ are equipped with a linear ordering, respectively. 
We should think of the linear orderings as preference relations. 
It is shown that each vertex-ordered bipartite graph has a unique 
greedy matching. The proof uses (a weak form of) Newman's 
lemma.\footnote{In its original form, this lemma states 
that a noetherian relation is confluent iff it is locally confluent.} 
We say a matching $M$ in $G$ is of \emph{order} $k<\infty$ if, on one 
hand, $M$ matches all vertices in $L$ and, on the other hand, 
$M$ leaves the $|R|-k$ (but not more) vertices of lowest preference 
in $R$ unmatched. If $M$ does not match each vertex in $L$, then 
its order is defined as $\infty$. The goal is now to find 
a matching of minimum order in $G$.  We show that the linear 
ordering imposed on $L$ can be chosen so that the resulting greedy 
matching is of minimum order. A similar result is shown for matchings
with the PBT-property (a property that is relevant in the so-called
Preference-Based Teaching model). Suppose that $G = (C,\cS,E)$ is 
(a vertex-ordered version of) the consistency graph associated
with a concept class $C$ and the corresponding family $\cS$ 
of labeled samples. Suppose furthermore that the ordering of
the samples in $\cS$ is chosen such that smaller samples are 
preferred over larger ones. It is shown in~\cite{FGHHT2024} that 
the largest sample that is assigned by the greedy matching to one 
of the concepts in $C$ has size at most $\log |C|$. We show that 
this upper bound is tight modulo a constant by proving a matching 
lower bound for a specific family of concept classes. The study 
of greedy matchings in vertex-ordered bipartite graphs is motivated 
by problems in learning theory like illustrating or teaching concepts 
by means of labeled examples.

\end{abstract}


\section{Introduction} \label{sec:introduction}

Consider a bipartite graph $G = (L,R,E)$ with vertex sets $L$
(the vertices on the left-hand side) and $R$ (the vertices on the 
right-hand side). Imagine that we have imposed a linear ordering,
called preference relation, on both vertex sets. A \emph{greedy matching}
in $G$ is defined as a maximal matching $M$ that can be produced 
edge-by-edge such that each new edge (to be inserted into $M$) is chosen 
in accordance with the following greedy-selection principle: 
\begin{enumerate}
\item
Pick a most preferred vertex $z$ taken either from the still unmatched 
vertices of $L$ or taken from the still unmatched vertices of $R$
(two options for choosing $z$). 
\item
Match the vertex $z$ with the vertex of highest preference that is
a possible partner of~$z$.\footnote{A ``possible partner'' of $z$ 
is any still unmatched vertex that is adjacent to $z$.}
\end{enumerate}
Suppose that $y_1,\ldots,y_N$ is a list of all elements in $R$
such that $y_k$ is the element with the $k$-highest preference.
The \emph{order} of a matching $M$ is defined as the largest 
number $k$ such that $y_k$ is matched by $M$. A matching in $G$
is called \emph{$L$-saturating} if every $x \in L$ is matched by $M$.
We aim at finding an $L$-saturating matching of smallest order.
This problem arises naturally in learning theory, where we have 
a concept class $C \seq 2^X$ over a domain $X$ at the place of $L$ 
and the family $\cS$ of all realizable samples at the place of $R$. 
The graph $G = (C,\cS,E)$ is then the so-called consistency graph 
which contains the edge $(c,S)$ iff the concept $c$ is consistent 
with the sample $S$. A $C$-saturating matching $M$ 
assigns to each concept $c \in C$ a sample $M(c) \in \cS$ so that $c$ 
is consistent with $M(c)$ and so that different samples are assigned
to different concepts. Intuitively, we can think of $M(c)$ as
a sample which illustrates or, perhaps, even \emph{teaches} $c$
by means of labeled examples.

\begin{example}
Suppose that $M$ has the property that $S = M(c)$ does not only
imply that $c$ is consistent with $S$ but also that each concept
of higher preference than $c$ is not consistent with $S$.
Then $c$ can be derived from $S$ simply by choosing the most preferred
concept that is consistent with $S$. We are then in the so-called
Preference-Based Teaching model (= PBT-model). See~\cite{GRSZ2017}
for a formal definition of this model.
\end{example}

In order to discuss greedy matchings in the consistency graph $G = (C,\cS,E)$, 
one has first to impose linear orderings on $C$ and on $\cS$. 
It looks natural, for instance, to prefer simple concepts over 
less simple ones and to prefer samples of small size over samples of
larger size. However, one could also be interested in an ordering
of the concepts in $C$ that is cleverly chosen so that the resulting 
greedy matching is a $C$-saturating matching of smallest possible order.

\bigskip\noindent
The main results in this paper are as follows:
\begin{enumerate}
\item
Each bipartite graph $G$ with linearly ordered vertex sets has 
a unique greedy matching.\footnote{A related, but weaker, result
is contained in~\cite{FGHHT2024}. See Section~\ref{sec:ars-greedy-matchings} 
for a more detailed discussion.} This result is obtained by considering 
an abstract rewriting system whose objects are the matchings in $G$ 
and whose binary relation between the objects has something to do 
with the above greedy-selection principle. The proof of uniqueness 
of the greedy matching will then be a simple application of 
Newman's lemma~\cite{N1942}. See Section~\ref{sec:ars-greedy-matchings} 
for details.
\item
Suppose that $G$ has an $L$-saturating matching. Then the greedy 
matching in $G$ is an $L$-saturating matching of minimum order provided
that the linear ordering of the vertices in $L$ is chosen appropriately.
A similar result holds for $L$-saturating matchings with the so-called
PBT-property (a property that is relevant in the Preference-Based 
Teaching model). See Section~\ref{sec:minimum-order} for details.
\item
Suppose that $G = (C,\cS,E)$ is the consistency graph associated
with a concept class $C$ and the corresponding family $\cS$ of 
labeled samples. Suppose furthermore that the linear ordering
imposed on $\cS$ is chosen such that smaller samples are preferred
over larger ones. Then the largest sample assigned by the greedy 
matching to one of the concepts in $C$ has size 
at most $\log |C|$.\footnote{As explained in more detail
in Section~\ref{sec:order-bounds}, this result is implicitly contained 
in~\cite{FGHHT2024}.} This upper bound is tight modulo a constant,
which is shown by presenting a matching lower bound for a specific
family of concept classes. See Section~\ref{sec:order-bounds} for 
details.
\end{enumerate}

\begin{description}
\item[Tacit Agreement:] All (directed or undirected) graphs considered
in this paper are assumed to be finite.
\end{description}


\section{Abstract Rewriting Systems and Greedy Matchings}
\label{sec:ars-greedy-matchings}

In Section~\ref{subsec:ars} we recall the definition of abstract 
rewriting systems and Newman's lemma~\cite{N1942},
but we restrict ourselves to the finite case. In the finite case,
Newman's lemma can be stated in a simple form and its proof becomes
quite short and straightforward. In Section~\ref{subsec:greedy-matchings}, 
we consider matchings in a vertex-ordered bipartite graph as the objects 
of an abstract rewriting system. Then we define the notion of greedy 
matchings and, as an application of Newman's lemma, we show that 
every vertex-ordered bipartite graph has a unique greedy matching. 
The latter result is more general than a related result in~\cite{FGHHT2024}.

\subsection{Abstract Rewriting Systems (ARS)} \label{subsec:ars}

A \emph{(finite) abstract rewriting system (ARS)} is formally given 
by a directed graph $D = (V,A)$. We write $a \ra b$ if $(a,b) \in A$. 
We write $a \path b$ and say that $b$ is \emph{reachable from $a$} 
if $D$ contains a path from $a$ to $b$. 
$D$ is said to satisfy the \emph{unique-sink 
condition} if, for every $a \in V$, there exists a sink $b \in V$ 
such that $b$ is the only sink that is reachable from $a$.
$D$ is said to be \emph{confluent} (resp.~\emph{locally confluent}) 
\emph{at vertex $a$} if, for every choice of $b,c \in V$ 
such that $a \path b$ and $a \path c$ (resp.~such that $a \ra b$ 
and $a \ra c)$, there exists a vertex $d \in V$ 
such that $b \path d$ and $c \path d$. $D$ is said to be 
\emph{confluent} (resp.~\emph{locally confluent}) if it is confluent 
(resp.~locally confluent) at every vertex $a \in V$. These notions
are related as follows:

\begin{lemma}[Newman's Lemma specialized to finite ARS]
Suppose that $D = (V,A)$ is an acyclic digraph. Then the following 
assertions are equivalent:
\begin{enumerate}
\item $D$ satisfies the unique-sink condition.
\item $D$ is confluent.
\item $D$ is locally confluent.
\end{enumerate}
\end{lemma}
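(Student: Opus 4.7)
The plan is to prove the cyclic chain of implications (1) $\Rightarrow$ (2) $\Rightarrow$ (3) $\Rightarrow$ (1). The middle implication is immediate, since each edge $a \ra b$ is already a path $a \path b$, so local confluence is just the special case of confluence where one restricts to length-one paths out of $a$.

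For (1) $\Rightarrow$ (2), I would first observe that finiteness together with acyclicity guarantees that from every vertex some sink is reachable: any maximal outgoing walk cannot repeat a vertex and therefore terminates at a sink. Let $s(v)$ denote the unique sink reachable from $v$ promised by the unique-sink condition. If $a \path b$, then any sink reachable from $b$ is also reachable from $a$ and must, by uniqueness, equal $s(a)$; since at least one sink is reachable from $b$, we conclude $s(b) = s(a)$. Applied to the two descendants $b$ and $c$ of $a$ appearing in the definition of confluence, this delivers the common target $d := s(a)$ with $b \path d$ and $c \path d$.

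The heart of the argument is (3) $\Rightarrow$ (1), the essence of Newman's lemma. I would proceed by well-founded induction on the quantity $\ell(a)$ defined as the length of a longest directed path starting at $a$; this is finite and well-defined because $D$ is finite and acyclic, and every proper descendant of $a$ satisfies $\ell < \ell(a)$. The base case $\ell(a) = 0$ corresponds to $a$ being a sink, in which case $a$ is trivially its own unique reachable sink. For the inductive step, let $s_1$ and $s_2$ be sinks reachable from $a$ via paths whose first edges are $a \ra a_1$ and $a \ra a_2$, respectively. If $a_1 = a_2$, then the inductive hypothesis at $a_1$ immediately forces $s_1 = s_2$. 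Otherwise local confluence at $a$ supplies a vertex $d$ with $a_1 \path d$ and $a_2 \path d$; picking any sink $s$ reachable from $d$ and invoking the inductive hypothesis at both $a_1$ and $a_2$ yields $s_1 = s = s_2$.

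The only real obstacle is this last implication, and the key design choice there is the well-founded measure. Using the longest-path length makes the descent argument clean and keeps the case analysis down to ``first edges coincide'' versus ``apply local confluence''; well-founded induction along the reverse of $\path$ would work equally well but is slightly more awkward to justify in purely digraph-theoretic language.
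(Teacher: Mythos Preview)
Your proof is correct and follows essentially the same approach as the paper: the same cyclic chain of implications, the same use of the unique reachable sink as the confluence witness in (1)$\Rightarrow$(2), and the same induction on longest-path length (the paper calls this the ``height'') with the same case split on whether the first edges coincide in (3)$\Rightarrow$(1). The only cosmetic difference is that the paper passes directly to a sink when applying local confluence, whereas you first obtain a common descendant $d$ and then pick a sink above it; the effect is identical.
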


\begin{proof}
Suppose that $D$ satisfies the unique-sink condition. 
Consider a triple $a,b,c \in V$ such that $a \path b$ and $a \path c$.
There is a unique sink $b'$ (resp.~$c'$) such that every path starting 
at $b$ (resp.~starting at $c$) and ending at a sink actually ends 
in $b'$ (resp.~in $c'$). Moreover, because $D$ is acyclic, such 
paths $b \path b'$and $c \path c'$ must exist. Hence we have the 
situation
\[ {\mathbf b'} \htap b \htap a \path c\path {\mathbf c'} \] 
where symbols representing sinks are written in bold.
Applying now the unique-sink condition with start vertex $a$, 
we may conclude that $b' = c'$. It follows from this discussion 
that $D$ is confluent. \\
If $D$ is confluent, then it is all the more true that $D$ is locally 
confluent. \\
Suppose that $D$ is locally confluent. The \emph{height} of $a \in V$
is defined as the length of the longest path from $a$ to some sink. 
We will prove by induction on $n$ that, for every $n \ge 0$, the 
unique-sink condition holds for every start vertex of height at most $n$.
This is certainly true for $n = 0$ (in which case the start vertex is 
a sink itself). Assume it is true for start vertices of height at most $n$. 
Consider a vertex $a \in V$ of height $n+1$ (if there is any). 
Let $P_1$ and $P_2$ be two
arbitrarily chosen paths which start at $a$ and end at a sink, say $P_1$
ends at the sink $b'$ and $P_2$ ends at the sink $c'$. We have to show
that $b' = c'$. This is immediate by induction if $P_1$ and $P_2$ start
with the same arc. Let us therefore assume that $P_1$ starts with the 
arc $a \ra b$, $P_2$ starts with the arc $a \ra c$ and $b \neq c$.
Since $D$ is locally confluent and acyclic, there must exist a 
sink $a'$ such that $b \path a'$ and $c \path a'$. Hence we have
the situation
\[ 
{\mathbf b'} \htap b \la a \ra c \path {\mathbf c'}\ \mbox{ and }\ 
b \path {\mathbf a'} \htap c 
\]
where, as before, symbols representing sinks are written in bold.
The height of $b$ is at most $n$. Hence the unique-sink condition
holds for start vertex $b$. It follows that $b' = a'$. 
The analogous reasoning for start vertex $c$ reveals that $c' = a'$,
Hence $b' = c'$, which concludes the proof.
\end{proof}

\subsection{Greedy Matchings} \label{subsec:greedy-matchings}

Let $G = (L,R,E)$ be a bipartite graph with vertex sets $L$ and $R$
and with edge set $E$. We write $x \sim y$ if $x$ and $y$ are neighbors
in $G$. Assume that we have imposed a linear order $\succ_L$
on $L$ and also a linear order $\succ_R$ on $R$. We will then refer
to $G = (L,R,E)$, equipped with linear $\succ_L$ and $\succ_R$,
as a \emph{vertex-ordered} bipartite graph. In the sequel, we
will simply use the symbol $\succ$ and it will always be clear from context
which of the two linear orderings it refers to. Intuitively we should
think of $\succ$ as a preference relation: $a \succ b$ means that $a$
is preferred over $b$. Consider now the following ARS $D_G = (V,A)$:
\begin{enumerate}
\item
The vertices in $V$ represent matchings in $G$ (including the empty matching), 
i.e., 
\[ V = \{v_M: M \seq E \mbox{ is a matching in $G$}\} \enspace . \]
\item
For each matching $M$ in $G$, we denote by $L_M \seq L$ the set of vertices
in $L$ which are still unmatched by $M$ and which are furthermore adjacent 
to at least one vertex in $R$ that is also still unmatched by $M$. 
The set $R_M \seq R$ is defined analogously.
\item
We include the arc $(v_M,v_{M'}) \in V \times V$ in $A$ if and only 
if the sets $L_M$ and $R_M$ are non-empty and $M'$ is of the 
form $M' = M \cup \{ \{x^*,y^*\} \}$ such that at least one of 
the following two conditions is satisfied:
\begin{description}
\item[L-Condition:]
$x^*$ is the vertex of highest preference in $L_M$ and $y^*$ is the vertex
of highest preference in $\{y \in R_M: y \sim x^*\}$.
\item[R-Condition:]
$y^*$ is the vertex of highest preference in $R_M$ and $x^*$ is the vertex
of highest preference in $\{x \in L_M: x \sim y^*\}$.
\end{description}
\end{enumerate}
A few (rather obvious) remarks are in place here:
\begin{enumerate}
\item 
If $(v_M,v_{M'}) \in A$, then $|M'| = |M|+1$, which implies that $D_G$ 
is acyclic.
\item 
Either $L_M$ and $R_M$ are both empty or they are both non-empty.
In the former case, $M$ is maximal matching and $v_M$ is a sink.
In the latter case, $M$ is not a maximal matching and $v_M$ is 
not a sink.
\item
Each non-sink $v_M$ in $D_G$ has outdegree $1$ or $2$. More precisely, $v_M$
has outdegree $1$ (resp.~$2$) if the vertex of highest preference in $L_M$ 
is adjacent (resp.~not adjacent) to the vertex of highest preference in $R_M$. 
\end{enumerate}
A matching $M \seq E$ in $G$ is called a \emph{maximal greedy matching} 
or simply a \emph{greedy matching} if  $v_M$ is a sink in $D_G$ 
(i.e, $M$ is maximal matching in $G$) and $v_\eset \path v_M$ (i.e., the vertex 
representing $M$ is reachable from the vertex representing the empty matching). 
Since $D_G$ will in general contain many vertices of outdegree $2$,
it seems, at first glance, that there could be many distinct 
greedy matchings in $G$. However, we will show the following result:

\begin{theorem} \label{th:unique-greedy-matching}
Every vertex-ordered bipartite graph $G = (L,R,E)$  has a unique 
greedy matching.
\end{theorem}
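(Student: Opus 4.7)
The plan is to apply Newman's lemma (from the previous subsection) to the ARS $D_G$. Since $D_G$ is acyclic and finite, every vertex reaches some sink; in particular $v_\eset$ reaches some sink, so at least one greedy matching exists. Once $D_G$ is shown to be locally confluent, Newman's lemma delivers the unique-sink condition, and applying it at $v_\eset$ yields the uniqueness of the greedy matching. The whole task therefore reduces to verifying local confluence.

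For local confluence, the only nontrivial case is a vertex $v_M$ of outdegree $2$, where the L- and R-Conditions produce two distinct edges. Let $x_L$ be the most preferred vertex in $L_M$ and $y_R$ the most preferred vertex in $R_M$. If $x_L \sim y_R$, then both conditions select $\{x_L,y_R\}$, so outdegree $2$ forces $x_L \not\sim y_R$. The L-Condition then produces $e_1 = \{x_L,y^L\}$ with $y^L$ the most preferred neighbor of $x_L$ in $R_M$, and the R-Condition produces $e_2 = \{x^R,y_R\}$ with $x^R$ the most preferred neighbor of $y_R$ in $L_M$; the edges $e_1,e_2$ are vertex-disjoint, so $M_{12} := M \cup \{e_1,e_2\}$ is again a matching. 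The plan is to take $v_{M_{12}}$ as the common descendant and exhibit the arcs $v_{M\cup\{e_1\}} \to v_{M_{12}}$ and $v_{M\cup\{e_2\}} \to v_{M_{12}}$. By symmetry it suffices to treat the first. In $M_1 := M \cup \{e_1\}$, the vertex $y_R$ is still unmatched (as $y^L \ne y_R$) and still adjacent to the unmatched vertex $x^R$ (since $x^R \ne x_L$); hence $y_R \in R_{M_1}$, and it remains the most preferred element of $R_{M_1}$. Since $x_L \not\sim y_R$, passing from $M$ to $M_1$ does not remove any neighbor of $y_R$ from the unmatched pool in $L$, so $x^R$ is still the most preferred neighbor of $y_R$ in $L_{M_1}$. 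The R-Condition at $M_1$ therefore fires and yields the required arc.

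The main obstacle is precisely this stability check: one must verify that the ``best neighbor'' selections underlying one of the greedy conditions survive a single application of the other. The hypothesis $x_L \not\sim y_R$ is what makes both preference structures robust, and once local confluence is thereby established, the remainder of the proof is merely a formal invocation of Newman's lemma.
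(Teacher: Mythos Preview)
Your proof is correct and follows essentially the same route as the paper: reduce to local confluence via Newman's lemma, and at an outdegree-$2$ vertex $v_M$ take the common successor $M \cup \{e_1,e_2\}$, observing that non-adjacency of the top $L$- and $R$-vertices forces the two greedy edges to be vertex-disjoint. If anything, you spell out the ``stability check'' (that the R-Condition still fires at $M_1$ with the same pair $\{x^R,y_R\}$) more explicitly than the paper does.
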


\begin{proof}
It suffices to show that $D_G$ satisfies the unique-sink 
condition.\footnote{The unique sink reachable from $v_\eset$ 
then represents the unique greedy matching.}
Since $D_G$ is acyclic, Newman's lemma applies and it
suffices to show that $D_G$ is locally confluent. Clearly $D_G$ is
locally confluent at any vertex $v_M$ of outdegree $0$ or $1$.
Consider therefore a vertex $v_M$ of outdegree $2$, say with an outgoing
arc to $v_{M_1}$ and another outgoing arc to $v_{M_2}$. We may assume
that $M_1 = M \cup \{ \{x_1^*,y_1^*\} \}$ satisfies the $L$-condition 
while $M_2 = M \cup \{ \{x_2^*,y_2^*\} \}$ satisfies the $R$-condition.
It follows that $x_1^*$ is the vertex of highest preference in $L_M$
and $y_1^*$ is the vertex of highest preference in $\{y \in R_M: y \sim x_1^*\}$.
Similarly, $y_2^*$ is the vertex of highest preference in $R_M$
and $x_2^*$ is the vertex of highest preference in $\{x \in L_M: x \sim y_2^*\}$.
Moreover $x_1^* \neq x_2^*$ and $y_2^* \neq y_1^*$ (because, otherwise,
the outdegree of $v_M$ would be $1$). Consider the matching 
\[ 
M_3 := M \cup \{ \{x_1^*,y_1^*\} , \{x_2^*,y_2^*\} \} 
= M_1 \cup \{ \{x_2^*,y_2^*\} \} = M_2 \cup \{ \{x_1^*,y_1^*\} \}
\enspace.
\]
With respect to $M_1$, $M_3$ satisfies the $R$-condition while,
with respect to $M_2$, $M_3$ satisfies the $L$-condition.
Therefore $v_{M_1} \ra v_{M_3}$ and $v_{M_2} \ra v_{M_3}$.
It follows from this discussion that $D_G$ is locally confluent.
\end{proof}

Let $M_G$ denote the unique greedy matching in $G$. Let $L_M$
and $R_M$ be defined as in the definition of $D_G$ above.
Consider the following greedy procedure $P_{greedy}$ for an 
edge-by-edge generation of~$M_G$.
\begin{description}
\item[Initialization:] $M \la \eset$.
\item[Main Loop:]
If $M$ is a maximal matching in $G$, then return $M_G = M$ and stop.
Otherwise let $x^*$ be the most preferred vertex in $L_M$, 
let $y^*$ be the most preferred vertex in $\{y \in R_M: y \sim x^*\}$,
and insert the edge $\{x^*,y^*\}$ into $M$.
\end{description} 
Let $P'_{greedy}$ be the corresponding procedure with exchanged roles 
of $L$ and $R$. It was shown in~\cite{FGHHT2024} (without using 
ARS-theory) that the procedures $P_{greedy}$ and $P'_{greedy}$ construct 
the same matching. In terms of the paths in $D_G$, this means that
\emph{two very special paths} from $v_\eset$ to a sink --- the path 
all of whose arcs have been inserted into $A$ on base of the $L$-condition
and the path all of whose arcs have been inserted into $A$ on base of
the $R$-condition --- end at the same sink.  
Theorem~\ref{th:unique-greedy-matching} is more general and implies 
that \emph{all} paths from $v_\eset$ to a sink must necessarily end 
at the same sink.

\section{L-Saturating Matchings of Minimum Order}
\label{sec:minimum-order}

Let $G = (L,R,E)$ be a bipartite graph with a linearly ordered vertex 
set $R$. Let $y_1 \succ y_2 \succ\ldots\succ y_N$ be an ordered list 
of the vertices in $R$.  We say that a matching $M \seq E$ in $G$ 
\emph{saturates $L$} if every vertex in $L$ is matched by $M$. 
The \emph{order} of $M$ is defined as the smallest number $k$ 
such that $M$ is an $L$-saturating matching that leaves $y_{k+1},\ldots,y_N$ 
unmatched. If $M$ is not $L$-saturating, then its order is $\infty$.

In order to make the greedy matching in $G$ unique, we will have
to commit ourselves to a linear ordering of the vertices in $L$.
For every fixed ordering $\succ$, we denote the resulting greedy
matching in $G$ by $M^\succ_G$. We will now pursue the question whether 
an $L$-saturating matching of minimum order can be obtained in a greedy 
fashion. We begin with the following result:

\begin{theorem} \label{th:greedy-optimal}
Suppose that $G = (L,R,E)$ is a bipartite graph with a linearly ordered 
vertex set $R$. Then, for every matching $M$ in $G$, there exists a linear 
ordering $\succ$  of the vertices in $L$ such that the order of $M^\succ_G$ 
is less than or equal to the order of $M$ (which implies that, if $M$ 
is $L$-saturating, then $M^\succ_G$ is $L$-saturating too).
\end{theorem}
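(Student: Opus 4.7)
If the order of $M$ is $\infty$ there is nothing to prove, so assume $M$ is $L$-saturating of finite order $k$. For each $x \in L$, write $M(x) = y_{\pi(x)}$ with $\pi(x) \leq k$. The plan is to order the vertices of $L$ according to the $R$-preferences of their $M$-partners: declare $x \succ x'$ iff $\pi(x) < \pi(x')$, so a vertex whose $M$-partner has high preference is itself high-preference in $L$. Enumerate the resulting chain as $x_1 \succ x_2 \succ \cdots \succ x_{|L|}$, so that $\pi(x_1) < \pi(x_2) < \cdots < \pi(x_{|L|}) \leq k$.

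By Theorem~\ref{th:unique-greedy-matching}, the greedy matching $M^\succ_G$ is uniquely determined, so I may compute it by running procedure $P_{greedy}$, which in each round picks the most preferred vertex in $L_M$ and matches it to its most preferred still-unmatched neighbor. I would then prove by induction on $i$ that after $i$ rounds $P_{greedy}$ has matched exactly $x_1,\ldots,x_i$ to partners $y_{j_1},\ldots,y_{j_i}$ satisfying $j_s \leq \pi(x_s)$ for every $s \leq i$. In the inductive step, the previously used indices satisfy $j_s \leq \pi(x_s) \leq \pi(x_{i-1}) < \pi(x_i)$, hence $y_{\pi(x_i)}$ is still unmatched; therefore $x_i$ lies in $L_M$, and since $x_1,\ldots,x_{i-1}$ are already matched, $x_i$ is the most preferred vertex of $L_M$. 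The $L$-condition then matches $x_i$ with its best unmatched neighbor $y_{j_i}$, and because $y_{\pi(x_i)}$ is an available candidate, we get $j_i \leq \pi(x_i)$.

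After $|L|$ rounds every vertex of $L$ has been matched, so $L_M = \eset$ and $P_{greedy}$ halts. Every $y$-index used by $M^\succ_G$ is then bounded by $\pi(x_i) \leq k$, so $M^\succ_G$ is $L$-saturating and has order at most $k$, as required. The single non-routine step is the inductive claim that $y_{\pi(x_i)}$ is still unmatched when $P_{greedy}$ reaches round $i$; this is precisely what forced the choice of $\succ$ above, ensuring that earlier rounds can only consume $y$-vertices of strictly higher $R$-preference than $y_{\pi(x_i)}$. Everything else (induction bookkeeping, termination, order calculation) is routine.
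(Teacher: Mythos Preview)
Your proof is correct and follows essentially the same route as the paper: order $L$ by the $R$-preferences of the $M$-partners, run $P_{greedy}$, and show by induction that the greedy partner of $x_j$ has index at most that of its $M$-partner. You have in fact spelled out the inductive step (in particular, why $y_{\pi(x_i)}$ is still available at round~$i$) more carefully than the paper, which simply declares the induction ``straightforward.''
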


\begin{proof}
If $M$ is not $L$-saturating, then its order is $\infty$ and there is
nothing to show. We may therefore assume hat $M$ is $L$-saturating.
Let $m = |L|$ and let $k$ be the order of $M$. 
Let $y_{i(1)} \succ\ldots\succ y_{i(m)}$ with $1 \le i(1) <\ldots< i(m) \le k$ 
be an ordered list of those vertices in $\{y_1,\ldots,y_k\}$ that have an
$M$-partner in $L$. For $j=1,\ldots,m$, let $x_j$ denote the vertex in $L$ 
with $M$-partner $y_{i(j)}$. Now impose the linear ordering $x_1 \succ\ldots\succ x_m$ 
on the vertices in $L$. In other words, the vertices in $L$ inherit the 
ordering from their $M$-partners in $R$. Think of $M^\succ_G$ as being 
generated edge-by-edge by the procedure $P_{greedy}$.  Let $y_{i'(j)}$ 
denote the $M^\succ_G$-partner of $x_j$. By a straightforward 
induction, we get $i'(j) \le i(j)$ for $j = 1,\ldots,m$.  Since 
the matching $M$ leaves the vertices $y_{k+1},\ldots,y_N$ unmatched, 
this is all the more true for the matching $M^\succ_G$.
\end{proof}

\noindent
From Theorem~\ref{th:greedy-optimal}, the following result is immediate:

\begin{corollary} \label{cor:greedy-optimal}
Suppose that $G = (L,R,E)$ is a bipartite graph that has an
$L$-saturating matching and its vertex set $R$ is equipped 
with a linear ordering.  Then there exists a linear ordering $\succ$ 
of the vertices in $L$ such that $M^\succ_G$ is an $L$-saturating 
matching of minimum order.
\end{corollary}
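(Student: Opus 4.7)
The plan is to derive the corollary directly from Theorem~\ref{th:greedy-optimal} by choosing $M$ to be an $L$-saturating matching of minimum order. First, I would argue that such an $M$ exists: by hypothesis, the set $\cM$ of $L$-saturating matchings in $G$ is non-empty, and since $G$ is finite, $\cM$ is finite. The order of any $M \in \cM$ is a natural number (bounded above by $|R|$), so the minimum $k^* := \min\{\mathrm{order}(M) : M \in \cM\}$ is attained by some $M^* \in \cM$.

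Next, I would apply Theorem~\ref{th:greedy-optimal} to this particular matching $M^*$. The theorem produces a linear ordering $\succ$ of $L$ such that the order of $M^\succ_G$ is at most the order of $M^*$, i.e.\ at most $k^*$. Moreover, because $M^*$ is $L$-saturating, the parenthetical conclusion of Theorem~\ref{th:greedy-optimal} guarantees that $M^\succ_G$ is $L$-saturating as well, so $M^\succ_G \in \cM$.

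It remains to observe that the order of $M^\succ_G$ cannot be strictly less than $k^*$, since $k^*$ is by definition the minimum order over all matchings in $\cM$ and $M^\succ_G \in \cM$. Hence $\mathrm{order}(M^\succ_G) = k^*$, which exhibits $M^\succ_G$ as an $L$-saturating matching of minimum order, as required.

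There is essentially no obstacle here: all the work has been absorbed into Theorem~\ref{th:greedy-optimal}. The only point worth stating explicitly is the (trivial) existence of a minimizer in $\cM$, which uses only finiteness of $G$, and the observation that $M^\succ_G$ lies back in $\cM$ so that its order is actually compared against $k^*$ on equal terms.
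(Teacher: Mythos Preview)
Your argument is correct and is exactly the intended one: the paper states that the corollary is immediate from Theorem~\ref{th:greedy-optimal}, and your proof spells out precisely this immediacy by applying the theorem to an $L$-saturating matching of minimum order (whose existence follows from finiteness). There is nothing to add or change.
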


Let now $G = (L,R,E)$ be a bipartite graph both of whose vertex sets
are equipped with a linear ordering. An $L$-saturating matching $M$
has the \emph{PBT-property}\footnote{PBT = Preference-Based Teaching} 
if, for every triple $x,y,x'$ such that $y$ is the $M$-partner of $x$ 
and $x' \succ x$, it follows that $x' \not\sim y$. In other words, 
the $M$-partner $x$ of $y$ is the vertex of highest preference 
in $\{x \in L: x \sim y\}$.

The following greedy procedure $P^{PBT}_{greedy}$ is quite similar 
to the procedure $P_{greedy}$ but it is tailor-made so as to return an 
L-saturating matching $M^{PBT}_G$ with the PBT-property (if possible):
\begin{description}
\item[Initialization:] $M \la \eset$.
\item[Main Loop:]
\begin{enumerate}
\item
If $L_M = \eset$ then return $M^{PBT}_G = M$ and stop. Otherwise
let $x^*$ be the vertex of highest preference in $L_M$, set
\[
R'_M \la
\{y \in R_M: y \sim x^* \mbox{ but } y \not\sim x' \mbox{ for every } 
x' \succ x\} \enspace ,
\] 
and proceed with the next instruction.
\item
If $R'_M = \eset$, then return $M^{PBT}_G = M$ and stop.\footnote{In this case,
the returned matching $M^{PBT}_G$ is not $L$-saturating.} Otherwise
let $y^*$ be the most preferred vertex in $R'_M$ and insert 
the edge $\{x^*,y^*\}$ into $M$.
\end{enumerate}
\end{description}

We refer to the matching $M^{PBT}_G$ that is returned by 
procedure $P^{PBT}_{greedy}$ as the {greedy matching
with the PBT-property}. This matching is optimal in the
following sense:

\begin{theorem} \label{th:greedy-pbt}
Suppose that $G = (L,R,E)$ is a vertex-ordered bipartite graph 
that has an $L$-saturating matching with the PBT-property. 
Then $M^{PBT}_G$ is also $L$-saturating. Moreover, among all $L$-saturating 
matchings with the PBT-property, $M^{PBT}_G$ is of the smallest order.
\end{theorem}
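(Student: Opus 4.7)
The plan is to mimic the proof of Theorem~\ref{th:greedy-optimal}, but this time the ordering on $L$ is already fixed and it is the PBT-property of a candidate matching $M$ (rather than a clever re-ordering) that drives the comparison. Fix any $L$-saturating matching $M$ with the PBT-property; I will show that $P^{PBT}_{greedy}$ terminates with an $L$-saturating matching whose order is at most that of $M$. Since $M$ is arbitrary, this simultaneously yields saturation of $M^{PBT}_G$ and optimality of its order.

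Let $x_1 \succ x_2 \succ\ldots\succ x_m$ be the ordering of $L$, and let $y_{j(i)}$ denote the $M$-partner of $x_i$. The central step is to prove, by induction on $i$, the following invariant: at the start of iteration $i$ of the main loop of $P^{PBT}_{greedy}$, the vertices $x_1,\ldots,x_{i-1}$ have already been matched (in this order) to vertices $y_{j'(1)},\ldots,y_{j'(i-1)}$ with $j'(k) \le j(k)$ for $k<i$, and the vertex $y_{j(i)}$ lies in the set $R'_M$ computed at iteration $i$.

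To carry out the inductive step I would check the three defining conditions of $R'_M$ for the candidate $y_{j(i)}$. Condition $y_{j(i)} \sim x_i$ is immediate from $\{x_i,y_{j(i)}\} \in M$. Condition $y_{j(i)} \not\sim x'$ for every $x' \succ x_i$ is precisely the PBT-property of $M$ applied to the edge $\{x_i,y_{j(i)}\}$. The only non-trivial condition is $y_{j(i)} \in R_M$, i.e.\ that the greedy procedure has not yet matched $y_{j(i)}$. But if it had been matched at some earlier iteration $k<i$, it would have been matched to $x_k$, forcing $y_{j(i)} \sim x_k$ with $x_k \succ x_i$, contradicting the previous point. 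Hence $R'_M$ is non-empty, the procedure does \emph{not} stop at iteration~$i$, and it picks the most preferred $y^* \in R'_M$, yielding $j'(i) \le j(i)$.

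Once the induction is complete, the procedure matches every vertex of $L$, so $M^{PBT}_G$ is $L$-saturating; moreover, since each $M^{PBT}_G$-partner of $x_i$ is no less preferred than the $M$-partner, we obtain $\mathrm{order}(M^{PBT}_G) = \max_i j'(i) \le \max_i j(i) = \mathrm{order}(M)$. The main obstacle, as is apparent, is the bookkeeping in condition (c): preventing the greedy procedure from ``stealing'' $y_{j(i)}$ before step $i$. This is exactly what the PBT-property of $M$ provides, because any potential early thief $x_k$ with $k<i$ is strictly more preferred than $x_i$ and is therefore forbidden by PBT from being adjacent to $y_{j(i)}$.
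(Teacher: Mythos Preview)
Your proof is correct and follows essentially the same route as the paper's: both fix an arbitrary $L$-saturating matching $M$ with the PBT-property, use the PBT-property to argue that $y_{j(i)}$ cannot have been matched before iteration $i$ (since it is non-adjacent to every $x_k$ with $k<i$), and conclude that the greedy procedure matches each $x_i$ to a partner at least as preferred as $y_{j(i)}$. Your write-up is simply more explicit about the induction invariant and the verification that $y_{j(i)} \in R'_M$, whereas the paper compresses this into a single sentence.
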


\begin{proof}
Let $M$ be an arbitrary $L$-saturating matching with the PBT-property.
Let $x_1 \succ\ldots\succ x_m$ be an ordered list of all elements in $L$. 
Similarly, let $y_1 \succ\ldots\succ y_N$ be an ordered list of all elements
in $R$. Let $k$ be the order of $M$, i.e., $y_k$ is matched by $M$
but $y_{k+1},\ldots,y_N$ are unmatched. Let $y_{i(1)},\ldots,y_{i(m)}$ 
be the $M$-partners of $x_1\ldots,x_m$, respectively. Since $M$ has the 
PBT-property, it follows that $y_{i(j)} \sim x_j$ and $y_{i(j)} \not\sim x_{j'}$
for all $j' = 1,\ldots,j-1$. Thus, when $P^{PBT}_{greedy}$ is about to match $x_j$,
we can be sure that $y_{i(j)}$ is still unmatched. Hence $P^{PBT}_{greedy}$
matches $x_j$ with $y_{i(j)}$ or even with a $y_i$ for some $i < i(j)$. 
We may conclude from this discussion that $M^{PBT}_G$ is an $L$-saturating
matching with the PBT-property that leaves the vertices $y_{k+1},\ldots,y_N$
unmatched. Since $M$ had been chosen arbitrarily, it follows that $M^{PBT}_G$
is of the smallest possible order.
\end{proof}

\section{Bounds on the Order of $L$-Saturating Matchings}
\label{sec:order-bounds}

In this section, we will restrict ourselves to the following setting
(which is the standard setting for binary classification problems in
learning theory):
\begin{itemize}
\item
At the place of the vertex set $L$, we have a class $C$ of concepts 
over some finite domain $X$, i.e., $C \seq 2^X$. Each $c \in C$ is a 
function $c:X \ra \{0,1\}$ which assigns a binary label to each point 
in the domain. 
\item
A pair $(x,b) \in X \times \{0,1\}$ is called a \emph{labeled example}. 
A set $S$ of labeled examples is called a \emph{sample}.
A concept $c \in C$ is consistent with a sample $S$ if, 
for every $(x,b) \in S$, we have that $c(x) = b$. A sample $S$ 
is called \emph{realizable by $C$} if $C$ contains a concept that 
is consistent with $S$.
\item
At the place of $R$, we have the family $\cS$ of all $C$-realizable 
samples. 
\item
The bipartite graph $G$ is chosen as (a vertex-ordered version 
of) the \emph{consistency graph} associated with $C$ and $X$, i.e., 
$G = (C,\cS,E)$ where 
\[
E = \{(c,S) \in C \times \cS: \mbox{$c$ is consistent with $S$}\}
\enspace .
\]
\item
Samples of smaller size are preferred over samples of larger size,
but ordering by cardinality is a partial ordering only. 
We assume that the $\succ$-ordering imposed on $\cS$ is a linear 
extension of the (partial) cardinality-ordering. 
\item
Let $S_1\succ\ldots\succ S_N$ be an ordered list of all $C$-realizable 
samples. Let $M$ be a matching of order $k$ in $G$. Then $S_k$ is the 
largest sample that is matched by $M$ and $|S_k|$ is called the \emph{cost 
caused by $M$}.
\end{itemize}

Since any concept $c \in C$ coincides with the unique concept in $C$ 
that is consistent with the full sample $\{(x,c(x)): x \in X\}$,
it is clear that $G$ has a $C$-saturating matching.
In particular, the greedy matching in $G$ is $C$-saturating.
In~\cite{FGHHT2024}, the authors raise the question of how the cost
caused by the greedy matching in $G$ can be bounded from above in terms 
of $|C|$. They make the following observation:

\begin{remark}[\cite{FGHHT2024}] \label{rem:ub}
Let $G = (C,\cS,E)$ be (a vertex-ordered version of) the consistency graph 
associated with concept class $C$. Let $M_G$ be the greedy matching in $G$ 
and let $q$ denote its cost. Consider a concept $c \in C$ whose $M_G$-partner 
is a sample $S \in \cS$ of size $q$. Because the greedy matching
would prefer, as a partner for $c$, any still unmatched proper subsample 
of $S$ over $S$, it follows that all proper subsamples of $S$ in $\cS$
are already matched with concepts of higher preference than $c$.
Since there are $2^{q-1}$ proper subsamples of $S$ in $\cS$, we may
conclude that $|C| \ge 2^{q-1}+1 > 2^{q-1}$.
\end{remark} 

\noindent
From this observation, the following upper bound is immediate: 

\begin{corollary}
For any finite domain $X$, any concept class $C \seq 2^X$ and any 
linear ordering imposed on $C$, the cost caused by the greedy matching 
in the consistency graph $G = (C,\cS,E)$ is bounded from above 
by $\lceil \log |C| \rceil$.
\end{corollary}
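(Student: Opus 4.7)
The plan is to derive the bound as a one-line consequence of Remark~\ref{rem:ub}. Let $q$ denote the cost of the greedy matching $M_G$ in the consistency graph $G = (C,\cS,E)$. The remark already supplies the key inequality $|C| > 2^{q-1}$, equivalently $|C| \geq 2^{q-1}+1$; the task therefore reduces to inverting this into an upper bound on $q$ expressed in terms of $\log|C|$.

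Taking base-$2$ logarithms of $|C| > 2^{q-1}$ yields $\log|C| > q-1$. From here I would conclude $q \leq \lceil \log|C|\rceil$ via a short case distinction. If $|C|$ is a power of $2$, then $\log|C|$ is itself a nonnegative integer, and the strict inequality $q-1 < \log|C|$ between integers forces $q \leq \log|C| = \lceil\log|C|\rceil$. Otherwise $\lceil\log|C|\rceil = \lfloor\log|C|\rfloor + 1 > \log|C| > q-1$; since $q-1$ and $\lceil\log|C|\rceil$ are both integers, a strict inequality between them again gives $q \leq \lceil\log|C|\rceil$.

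I do not anticipate any real obstacle: all the substance has already been absorbed into Remark~\ref{rem:ub}, whose counting argument (the $2^{q-1}$ proper subsamples of any greedy-assigned sample of size $q$ must already have been paired with concepts of higher preference) is precisely what produces the exponential relationship between $|C|$ and $q$. What remains is only the routine arithmetic conversion sketched above, and the bound $\lceil\log|C|\rceil$ falls out with no slack.
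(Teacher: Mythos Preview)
Your proposal is correct and follows essentially the same approach as the paper: both invoke Remark~\ref{rem:ub} to obtain $|C| > 2^{q-1}$, rewrite this as $q < 1 + \log|C|$, and then use the integrality of $q$ to conclude $q \le \lceil \log|C| \rceil$. The paper compresses your case distinction into the single phrase ``since $q$ is an integer,'' but the content is identical.
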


\begin{proof}
Let $q$ denote the cost caused by the greedy matching in $G$.  
The inequality $|C| > 2^{q-1}$ from Remark~\ref{rem:ub}
can be rewritten as $q < 1 + \log |C|$. Since $q$ is an 
integer, it follows that $q \le \lceil \log |C| \rceil$.
\end{proof}

Somewhat surprisingly perhaps, this simply knitted upper bound is tight 
(modulo a small constant). More precisely, we will show the following 
result:
 
\begin{theorem}
There exists a universal constant $\gamma_0$ such that,
for any finite domain $X$ and for any linear ordering
imposed on $C_{all} = 2^X = \{c: X \ra \{0,1\}\}$, 
the cost caused by the greedy matching in the consistency graph $G = (C_{all},\cS,E)$ 
is bounded from below by $\gamma_0 \cdot \log |C_{all}|$.
\end{theorem}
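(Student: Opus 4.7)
The approach is a purely combinatorial counting argument that bounds the cost of \emph{any} $C_{all}$-saturating matching from below; in particular it applies to $M_G$. Set $n = |X|$, so $|C_{all}| = 2^{n}$ and $\log|C_{all}| = n$. Because every partial labeling of $X$ extends to some concept in $C_{all}$, the family $\cS$ of $C_{all}$-realizable samples consists of \emph{all} partial labelings of $X$, and the number of samples of size $j$ equals $\binom{n}{j}\,2^{j}$.

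First I would verify that $M_G$ is $C_{all}$-saturating, so that its order $q$ is a finite integer. For every $c \in C_{all}$ the full sample $S_c = \{(x,c(x)) : x \in X\}$ is consistent with $c$ alone; hence if $c$ were unmatched, then $S_c$ would also be unmatched and the edge $\{c,S_c\}$ could be added to $M_G$, contradicting its maximality. So every concept is matched, and the cost $q$ is well defined.

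The counting step is then the heart of the argument. Since $M_G$ is an injection from $C_{all}$ into the set of samples of size at most $q$, one obtains the key inequality
\[
2^{n} \;\le\; \sum_{j=0}^{q}\binom{n}{j}\,2^{j}.
\]
We may assume $q \le 2n/3$ (otherwise the conclusion is immediate). On this range the summands $\binom{n}{j}2^{j}$ are strictly increasing in $j$, so the right-hand side is at most $(q+1)\binom{n}{q}2^{q} \le (q+1)(2en/q)^{q}$ by the standard estimate $\binom{n}{q} \le (en/q)^{q}$. Writing $\alpha = q/n$ and taking logarithms yields
\[
\alpha \cdot \log\!\Bigl(\tfrac{2e}{\alpha}\Bigr) \;\ge\; 1 \,-\, \frac{\log(q+1)}{n}.
\]

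The function $\alpha \mapsto \alpha \log(2e/\alpha)$ is continuous on $(0,1]$ and vanishes as $\alpha \to 0^{+}$, so there is a universal threshold $\alpha^{*} > 0$ (numerically near $0.22$) below which the displayed inequality is violated for all sufficiently large $n$. Picking any $\gamma_0 \in (0,\alpha^{*})$ --- and shrinking it further if necessary to absorb the finitely many small values of $n$ by a direct check --- gives $q \ge \gamma_0 n = \gamma_0 \log|C_{all}|$, as required. The only real obstacle is routine bookkeeping: controlling the $o(1)$ error term and choosing a single constant $\gamma_0$ that works uniformly in $n$, which is a mechanical computation rather than a new idea.
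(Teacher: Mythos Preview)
Your proposal is correct and follows essentially the same counting argument as the paper: any $C_{all}$-saturating matching injects $2^n$ concepts into the samples of size at most $q$, forcing $\sum_{j\le q}\binom{n}{j}2^j \ge 2^n$, and one then extracts $q \ge \gamma_0 n$ via the function $\alpha \mapsto \alpha\log(2e/\alpha)$. The only difference is cosmetic: the paper bounds the sum directly by $2^q\Phi_q(n) \le (2en/q)^q$, which eliminates your $(q+1)$ factor and yields an explicit $\gamma_0$ (the root of $(2e/\gamma)^\gamma = 2$) valid uniformly for every $n$, so no separate treatment of small $n$ or $o(1)$ bookkeeping is needed.
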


\begin{proof}
For sake of brevity, set $n = |X|$ so that $|C_{all}| = 2^n$ 
and $\log|C_{all}| = n$. For $d=0,1,\ldots,n$, 
set $\Phi_d(n) = \sum_{i=0}^{d} \binom{n}{i}$.  It is well known~\cite{CRS1994} 
that $\Phi_d(n) \le \left(\frac{en}{d}\right)^d$. The number of labeled samples 
of size at most $d$ equals 
\[
\sum_{i=0}^{d} \binom{n}{i} 2^i \le 2^d \Phi_d(n) \le \left(\frac{2en}{d}\right)^d
\enspace .
\]
Let $d^*$ be the smallest number $d \in \{1,\ldots,n\}$
such that $\left(\frac{2en}{d}\right)^d \ge 2^n$. For obvious reasons, 
the order of the greedy matching is bounded from below by $d^*$. 
Setting $\gamma^* = d^*/n \le 1$, we obtain the 
inequality $\left(\frac{2e}{\gamma^*}\right)^{\gamma^* n} \ge 2^n$,
or equivalently, $\left(\frac{2e}{\gamma^*}\right)^{\gamma^*} \ge 2$.
Consider the function $h(\gamma) = \left(\frac{2e}{\gamma}\right)^\gamma$. 
Note that $h(1) = 2e > 2$. 
\begin{description}
\item[Claim:]
The function $h(\gamma)$ is increasing in the interval $(0,1]$
and $\lim_{\gamma \ra 0}h(\gamma) = 1$.
\end{description}
The claim implies that there is a unique real number $0 < \gamma_0 \le 1$ 
which satisfies $h(\gamma_0) = 2$ and 
$\gamma_0 = \min\{\gamma \in (0,1]: h(\gamma) \ge 2\}$.\footnote{Numerical 
calculations reveal that $0.214 < \gamma_0 < 0.215$.} 
Since $h(\gamma^*) = \left(\frac{2e}{\gamma^*}\right)^{\gamma^*} \ge 2$
and $0 < \gamma^* \le 1$, it follows that $\gamma^* \ge \gamma_0$ 
and $d^* \ge \gamma_0 n$. \\
Although the verification of the above claim is an easy application 
of calculus, we will now present a proof thereby making the paper
more self-contained. The function $h$ can written in the form
\[ 
h(\gamma) = \exp\left[\gamma \cdot \ln\left(\frac{2e}{\gamma}\right)\right] =
\exp(g(\gamma))\ \mbox{ for }\   
g(\gamma) = \gamma \cdot \ln\left(\frac{2e}{\gamma}\right) \enspace .
\]
Since the exponential function is continuous and monotonically increasing,
it suffices to show that $g(\gamma)$ is increasing in 
the interval $(0,1]$ and $\lim_{\gamma \ra 0}g(\gamma) = 0$. 
Building the first derivative of $g$, we get 
\[
g'(\gamma) = \ln\left(\frac{2e}{\gamma}\right) 
+ \gamma \cdot \frac{\gamma}{2e} \cdot \frac{-2e}{\gamma^2} =
\ln\left(\frac{2e}{\gamma}\right) - 1 \enspace .
\]
Clearly $g'(\gamma) > 0$ for all $0 < \gamma < 2$. Hence $h(\gamma)$
is increasing in the interval $(0,1]$ (and even in the
interval $(0,2)$). We observe that
\[
\lim_{\gamma \ra 0}g(\gamma) = 
\lim_{\gamma \ra 0}\left( \gamma \cdot \ln(2e)\right) + 
\lim_{\gamma\ra 0}\left(\gamma \cdot \ln\left(\frac{1}{\gamma}\right) \right) = 
0 + \lim_{x\ra\infty}\frac{\ln(x)}{x} = 0 \enspace ,
\]
which completes the proof of the claim.
\end{proof}

\bibliography{rep}

\end{document}